\newtheorem{theorem}{Theorem}[section]
\newtheorem{lemma}[theorem]{Lemma}
\newtheorem{corollary}[theorem]{Corollary}
\theoremstyle{definition}
\theoremstyle{remark}
\numberwithin{equation}{section}
\begin{document}

\title{A reduction of 3-SAT problem to Buchberger algorithm.}

\author{Maiia Bakhova}
\address{Mathematics Department\\
Baton Rouge Community College\\
Baton Rouge, Louisiana}
\email{majabakh@gmail.com}

\subjclass{14Q20, 68Q15,13P10}
\date{}

\begin{abstract}
There is a number of known NP class problems, and majority of them have been 
shown to be equivalent to others. In particular now it is clear that 
constructing of a Gr\"{o}bner
basis must be one of equivalent problems, but there was no example. In the
following paper the reduction is constructed. 
\end{abstract}

\maketitle

\section{Introduction}
\label{introduction}

\noindent\textbf{3-SAT problem.} Let  $x_1, x_2, \ldots , x_k$ denote boolean
variables (their value may be True or False).  
A \textit{literal} is either a variable or its negation. An equation in the
propositional calculus is an expression that can be 
constructed using literals and the operations \textbf{and}, denoted by $\wedge$, 
and \textbf{or}, denoted by $\vee$. An example of such equations is 
$(x_1\wedge \neg x_2) \vee  (\neg x_3\wedge x_4)=\text{False}$. We will 
use exponents 1 and $-1$ to indicate if a variable is with negation or not; 
so $x^1_{i}:=x_{i}$ and $x^{-1}_{i}:=\neg x_{i}$. 
Any Boolean equation has an equivalent form as
$$\bigwedge_{i=1}^n 
(x_{i_1}^{\sigma_{i_1}}  \vee x_{i_2}^{\sigma_{i_2}}\vee  \ldots \vee
x_{i_m}^{\sigma_{i_m}})=\text{True}.$$ 
The problem of finding solution for such equation is called the 
\textbf{Boolean satisfiability problem}.
It has been shown that any satisfiability problem can be transformed
in polynomial number of steps (adding variables as needed) into an equation 
$$
\bigwedge_{i=1}^n (x_{i_1}^{\sigma_{i_1}}  \vee x_{i_2}^{\sigma_{i_2}}\vee x_{i_3}^{\sigma_{i_3}})=\text{True}.
$$
The corresponding problem is called the 3-SAT problem.
There are some obvious cases in which we can see immediately if a solution exists or not. For example if the number of clauses is fewer than 8
some solution always exists. From the other hand, if we consider a maximal number of clauses for a one solution, say 
when all variables are true, then the number of clauses which are true is  
$$\frac{2k(2k-2)(2k-4)}{6} -\frac{k(k-1)(k-2)}{6}
=\frac{7k(k-1)(k-2)}{6},$$ 
hence if there are more clauses, then a solution does not exists.

3-SAT problem is one of so called NP problems, see~\cite{sat}. 
There are different versions of required answer for such problem. 
One of them is  to find a solution which satisfies 
the Boolean equation. Another is Yes/No, meaning that one should
indicate if any solution exists.

\noindent\textbf{Gr\"{o}bner basis.} Gr\"{o}bner basis is defined as 
a basis of ideal in multivariate polynomial ring generated by a given set of polynomials,
see~\cite{grbnr}.
The basis is required to have specific properties which help quickly and 
efficiently to resolve a
number of questions about the ideal. For example, it allows to get quick 
answers about its zero locus. In particular, if the zero locus is empty set,
then the corresponding Gr\"{o}bner basis equals 1. The process of finding
a Gr\"{o}bner basis is called \textbf{Buchberger algorithm}.

\section{The Reduction}
\label{reduction}

Consider a usual 3-SAT problem, which is a Boolean equation with $k$ variables
 $x_1, x_2, \ldots , x_k$.  Then the 3-SAT problem
can be transformed into a system of $n$ Boolean equations:
\begin{equation}\label{3sat}
\begin{cases}
x_{1_1}^{\sigma_{1_1}}\vee  x_{1_2}^{\sigma_{1_2}}\vee x_{1_3}^{\sigma_{1_3}} =\text{True} \\
\cdots \cdots \\
x_{n_1}^{\sigma{n_1}}\vee  x_{n_2}^{\sigma_{n_2}}\vee x_{n_3}^{\sigma_{n_3}} =\text{True}
\end{cases}
\end{equation} All variables $x_{1_1}$, \ldots, $x_{n_3}$ (usually with repetitions)  in the 
system are from the set  $\{ x_1, x_2, \ldots , x_k \}$. All exponents can
admit only values 1 or -1.

{\bf Conversion procedure.} We will replace the system of Boolean equations 
with a system 
of real polynomial equations with the following formal process. Introduce new real valued 
variables $z_1, z_2, \ldots , z_k$.  I want each appearance
of a literal $x_{i_j}^{\sigma_{i_j}}$ replace with $(z_{i_j}-1)$ or
$z_{i_j}$ depending on the value of exponent $\sigma_{i_j}$. 
For this purpose we introduce constants $c_{1_1}$, \ldots, $c_{n_3}$, which values are defined by exponents:
\begin{equation}c_{i_j}=
\begin{cases} 
1, \text{ if }  \sigma_{i_j}=1\\
0, \text{ if }  \sigma_{i_j}=-1\\
\end{cases}
\end{equation}
Now let us substitute $x_{i_j}^{\sigma_{i_j}}$ with $(z_{i_j}-c_{i_j})$
 Replace a
boolean operation $\vee$ with a usual multiplication and the "True" value with 0. 
Now the system looks like a polynomial equation system:
\begin{equation}
\begin{cases}
(z_{1_1}-c_{1_1})  (z_{1_2}-c_{1_2})(z_{1_3}-c_{1_3}) =0 \\
\cdots \cdots \\
(z_{n_1}-c_{n_1})  (z_{1_2}-c_{1_2})(z_{1_3}-c_{1_3}) =0
\end{cases}
\end{equation} We say that such system is associated to the system of
Boolean equations \ref{3sat}.

\begin{lemma}
A set of Boolean values $\{ R_1, R_2, \ldots , R_k\}$ 
is a solution for 3-Sat problem iff the number set
$\{ r_1, r_2, \ldots , r_k\}$  is a solution for the associated polynomial
equation system, where $r_i=1$ if $R_i=$True and $r_i=0$ if $R_i=$False for 
all $i=1,2,\cdots, k$.
\end{lemma}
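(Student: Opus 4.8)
The plan is to peel the equivalence apart one layer at a time, from the whole system down to a single literal, exploiting that a Boolean conjunction is satisfied exactly when each clause is True, while the polynomial system is satisfied exactly when each product equation holds. The heart of the matter is a single elementary fact about one literal, which I would isolate first. Fix the substitution $z_i\mapsto r_i$ of the statement and claim: a literal $x_{i_j}^{\sigma_{i_j}}$ is True under $\{R_1,\dots,R_k\}$ if and only if the associated factor $(z_{i_j}-c_{i_j})$ evaluates to $0$ under $\{r_1,\dots,r_k\}$. This is verified by splitting on the two possible exponents. If $\sigma_{i_j}=1$ then $c_{i_j}=1$ and the factor is $(z_{i_j}-1)$; here the literal is $x_{i_j}$ itself, which is True exactly when $r_{i_j}=1$, that is, exactly when $z_{i_j}-1=0$. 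If $\sigma_{i_j}=-1$ then $c_{i_j}=0$ and the factor is $z_{i_j}$; here the literal is $\neg x_{i_j}$, which is True exactly when $x_{i_j}$ is False, i.e. when $r_{i_j}=0$, i.e. when $z_{i_j}=0$. In both cases ``literal True'' and ``factor $=0$'' coincide.

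Next I would lift this from a literal to a clause. The disjunction $x_{i_1}^{\sigma_{i_1}}\vee x_{i_2}^{\sigma_{i_2}}\vee x_{i_3}^{\sigma_{i_3}}$ is True precisely when at least one of its three literals is True, which by the literal fact is equivalent to at least one of the three factors being $0$. Since the factors take real (indeed integer) values and a product of real numbers vanishes if and only if one of the factors vanishes, this is equivalent to the product $(z_{i_1}-c_{i_1})(z_{i_2}-c_{i_2})(z_{i_3}-c_{i_3})$ being $0$. Consequently the $i$-th Boolean clause is True under $\{R_1,\dots,R_k\}$ if and only if the $i$-th polynomial equation is satisfied under $\{r_1,\dots,r_k\}$.

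Finally I would assemble the system-level statement: the Boolean system~\ref{3sat} is a solution exactly when every clause is True, and the associated polynomial system is satisfied exactly when every product equation holds, so ranging the clause equivalence over $i=1,\dots,n$ yields the lemma, and because each biconditional is genuinely two-sided both implications follow at once. I do not expect a serious obstacle here, as the argument is a direct case check; the one point that actually carries content — and which I would state carefully rather than treat as routine — is that the encoding is arranged so that a \emph{True} literal corresponds to a \emph{vanishing} factor (not to the value $1$, as one might first guess), which is exactly what matching the constants $c_{i_j}$ against the two choices $\sigma_{i_j}=\pm1$ accomplishes. Once that alignment is in place, the passage from clauses to products rests only on the absence of zero divisors in the real numbers.
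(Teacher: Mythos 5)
Your proof is correct and follows essentially the same route as the paper's: a case split on $\sigma_{i_j}=\pm 1$ showing that a True literal corresponds to a vanishing factor, followed by the observation that a product of reals vanishes iff some factor does. Your write-up is in fact more carefully layered (literal, then clause, then system) than the paper's, which argues on a single representative clause and then appeals to generalization, but the underlying argument is identical.
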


\begin{proof}
Without loss of generality we can assume that the problem~\ref{3sat} includes 
an equation
$x_1^{\sigma_1}\vee  x_2^{\sigma_2}\vee x_3^{\sigma_3} =\text{True}$.
The associated polynomial equation will be 
$(z_1-c_1) (z_2-c_2) (z_3-c_3) =0$.

Let us consider a case when $\sigma_1=1$. Observe that when we assign a 
Boolean variable $x_1$ to be True, and the corresponding variable $z_1$ is~1,
then the clause on the right is True, and the polynomial at the left
side of the equation associated to such clause 
vanishes, because it has a factor $(z_1-1)$. 
In the case when the Boolean variable is False, then the value of the clause
is defined by values of other variables. The corresponding factor in 
polynomial expression 
is non zero and the whole expression can vanish only if there is other factor 
which is zero.
The same is true in reverse when we start with assigned values for 
variables of 
the polynomial system. When the numerical variable $z_1$ is 1, then 
the associated polynomial vanishes. The
corresponding Boolean variable $x_1=$True and the initial Boolean equation is
true.

The case $\sigma_1=-1$ is treated similarly. Now we can generalize the reasoning
for all clause equations.
\end{proof}
\begin{corollary}
A 3-Sat problem has a solution iff the corresponding polynomial equation system has a numeric solution. 
\end{corollary}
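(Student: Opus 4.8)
The plan is to read the corollary off from the Lemma, but with care about one genuine gap. The forward implication is immediate: if $\{R_1,\ldots,R_k\}$ solves the 3-SAT problem, then by the Lemma the tuple $\{r_1,\ldots,r_k\}$ with $r_i\in\{0,1\}$ read off from the truth values is a solution of the associated polynomial system, so a numeric solution certainly exists. I would dispose of this direction in a single sentence.

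The reverse implication is where the actual work lies, and it is also the main obstacle. The Lemma only characterizes numeric assignments whose coordinates lie in $\{0,1\}$, whereas a numeric solution of the polynomial system is a priori an arbitrary point of $\mathbb{R}^k$. Each equation has the form $(z_a-c_a)(z_b-c_b)(z_c-c_c)=0$, so its solution set is a union of hyperplanes $z_j=c_j$, and the full variety is vastly larger than the finite set of $\{0,1\}$-valued points. Hence one cannot simply quote the Lemma; the difficulty is to pass from an arbitrary real solution to a $\{0,1\}$-valued one.

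To bridge that gap I would argue as follows. Suppose $(z_1,\ldots,z_k)$ is any real solution. In each equation at least one factor must vanish, say $z_{i_j}=c_{i_j}$; since $c_{i_j}\in\{0,1\}$, this pins the corresponding coordinate to a value in $\{0,1\}$. I then define a new tuple $(z_1',\ldots,z_k')$ by setting $z_m'=z_m$ whenever $z_m\in\{0,1\}$ and $z_m'=0$ otherwise. No conflict can arise, because each $z_m$ has a single real value, so a coordinate cannot be pinned to both $0$ and $1$ at once. Every factor $z_{i_j}-c_{i_j}$ that vanished before still vanishes under the primed tuple, so $(z_1',\ldots,z_k')$ is again a solution, and it is now $\{0,1\}$-valued.

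Finally I would apply the Lemma to $(z_1',\ldots,z_k')$: the Boolean assignment with $R_i=\text{True}$ exactly when $z_i'=1$ solves the 3-SAT problem, which finishes the reverse implication. The only point I expect to need an explicit word is the consistency check that the rounding introduces no contradictory pinning; the rest is a direct reading of the Lemma.
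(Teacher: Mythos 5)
Your proposal is correct, and it is substantially more careful than the paper's own treatment, which consists of the single sentence ``This is trivial.'' The gap you identify is genuine: the Lemma is a biconditional only for $\{0,1\}$-valued assignments, while the zero set of each polynomial $(z_a-c_a)(z_b-c_b)(z_c-c_c)$ is a union of hyperplanes containing a continuum of points with arbitrary real coordinates, so the reverse implication of the Corollary does not follow from the Lemma by mere citation. Your rounding argument closes the gap cleanly: every factor that vanishes at the given real solution forces its coordinate to equal some $c_{i_j}\in\{0,1\}$, so that coordinate survives the rounding unchanged and the factor still vanishes afterwards, while coordinates outside $\{0,1\}$ can be sent to $0$ (or indeed to either Boolean value) without destroying any vanishing factor; the resulting $\{0,1\}$-point is then in the scope of the Lemma. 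The consistency remark --- that a single real number cannot equal both $0$ and $1$, so the pinned coordinates cannot conflict even when a variable occurs both negated and unnegated across clauses --- is exactly the observation needed to make the retraction well defined. In short, the paper asserts the statement without argument; you supply the argument it actually requires, and nothing in your write-up is wrong or missing.
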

\begin{proof}
This is trivial.
\end{proof}
\begin{theorem}
A 3-SAT problem can be resolved using Buchberger algorithm.
\end{theorem}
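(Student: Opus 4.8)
The plan is to route the decision version of 3-SAT through Hilbert's weak Nullstellensatz, which converts the geometric question ``does the associated system have a solution?'' into the purely algebraic question ``is $1$ a member of the ideal generated by the equations?'' --- and the latter is exactly what a reduced Gr\"obner basis exposes. Since Buchberger's algorithm computes a Gr\"obner basis, running it on the generators of that ideal and inspecting the output settles satisfiability.

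Concretely, let $f_1,\dots,f_n\in\C[z_1,\dots,z_k]$ be the clause polynomials $f_i=(z_{i_1}-c_{i_1})(z_{i_2}-c_{i_2})(z_{i_3}-c_{i_3})$ produced by the conversion procedure, and set $I=\langle f_1,\dots,f_n\rangle$. By the Corollary, the 3-SAT problem has a solution iff the associated system has a numeric solution. I want to read this as the statement that the variety $V(I)\subseteq\C^k$ is nonempty, so that the weak Nullstellensatz applies: $V(I)=\emptyset$ iff $1\in I$ iff $I=\C[z_1,\dots,z_k]$.

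Next I would invoke the standard fact that, for any fixed monomial order, the reduced Gr\"obner basis $G$ of $I$ equals $\{1\}$ precisely when $1\in I$. Buchberger's algorithm terminates on the input $\{f_1,\dots,f_n\}$ and returns such a $G$. Hence the procedure is: compute $G$; output ``unsatisfiable'' when $G=\{1\}$ and ``satisfiable'' otherwise. This resolves the Yes/No form of 3-SAT; the search form can be recovered afterward from a zero-dimensional basis, but the decision content is the heart of the claim.

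The hard part is the bookkeeping that makes the Nullstellensatz step legitimate: a priori $V(I)$ lives in $\C^k$ and could carry spurious points whose coordinates are neither $0$ nor $1$, whereas the Lemma speaks only about $0/1$ assignments. Two clean fixes are available. One is to adjoin the polynomials $z_i^2-z_i$ for $i=1,\dots,k$ to the generating set; these force every coordinate into $\{0,1\}$, so the complex variety coincides with the set of $0/1$ solutions and all solutions are automatically real. The other is to observe directly that a factor $(z_{i_j}-c_{i_j})$ can vanish only when its variable takes the value $0$ or $1$, so any complex zero of $I$ already induces a Boolean assignment satisfying every clause, making ``$V(I)=\emptyset$ over $\C$'' equivalent to ``no Boolean solution'' without enlarging the ideal. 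Establishing this equivalence is the one place where genuine care is needed; once it is in hand, the remainder is just the termination and correctness of Buchberger's algorithm together with the elementary membership criterion $1\in I\Leftrightarrow G=\{1\}$.
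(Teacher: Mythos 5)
Your proposal is correct and follows the same route as the paper: form the ideal generated by the clause polynomials, run Buchberger's algorithm, and decide satisfiability by whether the reduced Gr\"{o}bner basis is $\{1\}$. The difference is one of completeness rather than of strategy. The paper's proof is a two-sentence sketch that never states the membership criterion $1\in I$, never invokes the Nullstellensatz, and never confronts the fact that the Lemma and Corollary only speak about $0/1$ assignments while the emptiness test supplied by a Gr\"{o}bner basis concerns the whole variety over an algebraically closed field (indeed the paper works over $\mathbb{R}$, where the weak Nullstellensatz fails in general). You identify exactly this gap and close it correctly: your second fix --- a factor $(z_{i_j}-c_{i_j})$ can vanish only when its variable equals $0$ or $1$, so any complex zero of the system induces a satisfying Boolean assignment once the remaining variables are set arbitrarily --- is the minimal repair, and it also shows why restricting attention to real solutions is harmless here. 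Adjoining the polynomials $z_i^2-z_i$ is the more standard device and works equally well. In short, same reduction, but your write-up supplies the justification the paper's proof omits.
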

\begin{proof}
Given a 3-SAT problem construct the associated
system of polynomial equations. Take all polynomials on the left 
hand side of the equations and use them to define an ideal. Now use Buchberger
algorithm to find solutions. The question about the existence of the solution
will be answered, and there are different approaches to find a particular
solution.
\end{proof}

Note that the same reduction can be done for a 2-SAT problem and it yields a 
quadratic equation system. Then choosing a specific value of one of the 
variable turns some equations into linear ones with simple solution. The process
forces other variables to take specific values or show that they can take any value. This is
similar to how the equivalent graph problem is solved. Now the reason why 
the 3-SAT and 2-SAT problems are so different becomes apparent in polynomial
interpretation, which answers a question of Pr. E. M. Luks raised in a private
discussion.

\section*{Acknowledgments}
The author thanks E. M. Luks for introduction to the problem and helpful discussions.

\end{document}